\documentclass[a4paper]{article}

\usepackage{amsthm,amsmath,amsfonts}

\newtheorem{thm}{Theorem}
\newtheorem{cor}[thm]{Corollary}
\newtheorem{lem}[thm]{Lemma}

\newcommand{\RR}{{\mathbb R}}
\newcommand{\CC}{{\mathbb C}}
\newcommand{\NN}{{\mathbb N}}

\newcommand{\Sch}{{\mathcal S}}
\newcommand{\FF}{{\mathcal F}}
\newcommand{\HH}{{\mathcal H}}
\newcommand{\VV}{{\mathcal V}}

\newcommand{\varkappa}{\kappa}

\DeclareMathOperator{\spec}{spec}

\DeclareMathOperator{\supp}{supp}

\sloppy

\renewcommand{\det}{{\rm det}}

\newcommand{\mR}{\mathbb{R}}
\newcommand{\mC}{\mathbb{C}}
\newcommand{\cH}{\mathcal{H}}
\newcommand{\cF}{\mathcal{F}}

\begin{document}

\title{\bf On the discrete spectrum of spin-orbit Hamiltonians
with singular interactions}

\author{\sc Jochen Br\"uning\dag \and \sc Vladimir Geyler\dag\ddag\thanks{Deceased
on April 2, 2007} \and \sc Konstantin Pankrashkin\dag\S\\[\bigskipamount]
\dag Institut f\"ur Mathematik, Humboldt-Universit\"at zu Berlin\\
Rudower Chaussee 25, 12489 Berlin, Germany\\
\ddag Mathematical Faculty, Mordovian State University\\
430000 Saransk, Russia\\
\S D\'epartement de Math\'ematiques,
Universit\'e Paris Nord\\ 99 av. J.-B. Cl\'ement, 93430 Villetaneuse, France\\
Corresponding author,
e-mail: const@math.hu-berlin.de}

\date{}

\maketitle

\begin{abstract}
We give a variational proof of the existence of infinitely many
bound states below the continuous spectrum for spin-orbit Hamiltonians
(including the Rashba and Dresselhaus cases) perturbed by measure potentials
thus extending the results of J.~Br\"uning, V.~Geyler, K.~Pankrashkin:
J. Phys. A: Math. Gen. {\bf 40} (2007) F113--F117.
\end{abstract}

\section{Introduction}

It is well known that perturbations of the Laplacian
by sufficiently localized potentials produce only a finite number
of negative eigenvalues, and only long-range potentials can produce
an infinite discrete spectrum, see Section XII.3 in \cite{RS4}. 
This does not hold any more if one considers perturbations of magnetic Schr\"odinger
operators, where compactly suppported perturbations demonstrate a non-classical
behavior~\cite{RW,RT}.

Recently such questions have become of interest in the study
of operators related to the spintronics. Namely, in \cite{CM}
it was emphasized that perturbations of the Rashba and Dresselhaus
Hamiltonians
\begin{gather*}
H_R=\begin{pmatrix}
p^2 & \alpha(p_y+ip_x)\\
\alpha(p_y-ip_x) & p^2
\end{pmatrix},\\
H_D=
\begin{pmatrix}
p^2 & -\alpha(p_x+i p_y)\\
-\alpha(p_x-ip_y) & p^2
\end{pmatrix},
\end{gather*}
($\alpha$ is a constant expressing the strength of the
spin-orbit coupling \cite{BR,RS,Win})
by localized spherically symmetric negative potentials produce infinitely
many eigenvalues below the continuous spectrum; in the justification
some approxations have been used.
In the paper \cite{BGP1} we gave a rigoruous proof of this effect
for rather general negative potentials without any symmetry conditions.
In the present note we are going to extend these results and to obtain similar
estimates for operators of the form $H=H_0+\nu$, where
$H_0$ is an unperturbed spin-orbit Hamiltonian and
$\nu$ is a measure (whose support can have the zero Lebesgue measure).
In particular, for the Rashba and Dresselhaus Hamiltonians
we show that negative perturbations
supported by curves  always produce infinitely many bound states below
the threshold.

\section{Definition of Hamiltonians}

Denote by $\HH$ the Hilbert space
$L^2(\RR^2)\otimes\CC^2$ of two-dimensional spinors; by $\FF$ we
denote the Fourier transform $\FF:\,L^2(\RR^2)\rightarrow
L^2(\RR^2)$; then $\FF_2:=\FF\otimes 1$ is the Fourier
transform in $\HH$. 
Let $H_0$ be the self-adjoint operator in
$\cH$ whose Fourier transform $\widehat H_0:=\cF_2H_0\cF_2^{-1}$ is the
multiplication by the matrix
\begin{equation}
                \label{S1}
\widehat H_0(p)=\begin{pmatrix} p^2& A(p)\\
                  \overline{A(p)}& p^2\\
                  \end{pmatrix}\,,\quad p\in\mR^2\,,
\end{equation}
where $A$ is a continuous complex function on $\mR^2$.
We assume
\begin{equation}
\label{A1} \limsup\limits_{p\to \infty}\dfrac{|A(p)|}{p^2}<1.
\end{equation}

Clearly, $H_0$ has no discrete spectrum; its spectrum is the union
of images of two functions $\lambda_{\pm}$ (dispersion laws):
$\lambda_\pm(p)=p^2\pm|A(p)|$, hence $\spec
H_0=[\varkappa,+\infty)$, where
$\varkappa:=\inf\{p^2-|A(p)|\,:\,p\in\mR^2\}>-\infty$.
Moreover, there is a unitary matrix $M(p)$ depending
continuously on $p\in\mR^2$ such that
\begin{equation}
                \label{S3}
M(p)\widehat H_0(p)M^*(p)=
                 \begin{pmatrix} \lambda_+(p)&0\\
                  0& \lambda_-(p)\\
                  \end{pmatrix},\quad\,p\in\mR^2\,.
\end{equation}
Denote $S:=\{p\in\mR^2:\,\lambda_-(p)=\varkappa\}$; this is a
non-empty compact set. We will assume that
\begin{equation}
\label{A2}
\text{the function $|A(p)|$ is of class $C^2$ in a
neighborhood of $S$.}
\end{equation}
For the Rashba and Dresselhaus Hamiltonians one has
$\varkappa=-\alpha^2/4$  and $S$ is the circle
$\{p: 2|p|=|\alpha|\}$, and the condition (\ref{A2}) is obviously satisfied.

The two condtions \eqref{A1} and \eqref{A2} guarantee
that for any $p_0\in S$ there is a constant $c(p_0)>0$ such that
\begin{equation}
                \label{S4}
0\le\lambda_-(p)-\varkappa\le c(p_0)|p-p_0|^2\quad
\text{for all }p\in\mR^2.
\end{equation}

We fix a positive Radon measure $m$ on $\RR^2$ and a bounded
Borel measurable function $h:\RR^2\to\RR$ such that there exist
constants $a\in (0,1)$ and $b>0$ with
\begin{equation}
         \label{eq-inf}
\int_{\RR^2} (1+|h(x)|^2)|f(x)|^2 m(dx)\le
a\int_{\RR^2} |\nabla f(x)|^2dx+b\int_{\RR^2} |f(x)|^2dx
\end{equation}
for all $f$ from the Schwartz space $\Sch(\RR^2)$.
Denote $\nu:=hm$.
The assumption \eqref{eq-inf} is satisfied for any (bounded) $h$
if $m$ belongs to the Kato class measures, i.e.
\[
\lim_{\varepsilon\to 0+} \sup_{x\in\RR^2}
\int_{|x-y|<\varepsilon}
\log \dfrac{1}{|x-y|}\big| m(dy)=0.
\]
This holds, for example, for $\delta$-type measures concentrated on $C^1$ curves
under some regularity conditions (this conditions are satisfied for compact curves and straight lines),
see Section 4 in \cite{BEKS} for details.

Our aim now is to give a rigorous definition of the operator
given by the formal expression
\begin{equation}
    \label{eq-hhm}
H=H_0+\nu.
\end{equation}

As $\Sch(\RR^2)$ is dense in the Sobolev space $H^1(\RR^2)$
there exists a unique linear bounded transformation $J$
defined by
\[
J:H^1(\RR^2)\to L^2(\RR^2,m), \quad
J f=f \text{ for all } f\in\Sch(\RR^2).
\]
We denote $J_2:=J\otimes 1$;
this is an operator acting from $H^1(\RR^2)\otimes\CC^2$ to $L^2(\RR^2,m)\otimes\CC^2$.
For a continuous function $f$ we denote the corresponding
equivalence classes in $L^2(\RR)$ and $L^2(\RR, m)$
by the same letter $f$.

Now note that the operator $H_0$ can be presented as
\[
H_0=-\Delta_2  + L, \quad
-\Delta_2:=-\Delta\otimes 1, \quad
\widehat L:=\cF_2 L\cF_2^{-1}=\begin{pmatrix} 0& A(p)\\
                  \overline{A(p)}& 0\\
                  \end{pmatrix}.
\]
(Here $\Delta$ is the scalar two-dimensional Laplacian.)
\begin{lem}\label{lem1}
The operator $L$ is relatively bounded with respect to $\Delta_2$, and
$\|L(-\Delta_2+\lambda)^{-1}\|<1$ for $\lambda\to+\infty$.
\end{lem}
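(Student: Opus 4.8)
Lemma 1 asserts that $L$ is relatively $\Delta_2$-bounded with relative bound eventually less than $1$. Since $L$ is block off-diagonal with the multiplication-by-$A(p)$ entries, everything should reduce in Fourier space to an estimate on the matrix $\widehat L(p)$ against the scalar symbol $p^2$ of $-\Delta$. So the first step is to pass to the Fourier picture: write $\widehat L = \cF_2 L \cF_2^{-1}$ and $\widehat{(-\Delta_2)} = \cF_2(-\Delta_2)\cF_2^{-1}$, the latter being multiplication by $p^2$ on each spinor component. Conjugating by the unitary $\cF_2$ preserves operator norms, so $\|L(-\Delta_2+\lambda)^{-1}\| = \|\widehat L (p^2+\lambda)^{-1}\|$, and the right-hand side is the essential supremum over $p\in\mR^2$ of the norm of the $2\times 2$ matrix $\widehat L(p)/(p^2+\lambda)$.

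Key estimate. I will then bound this matrix norm pointwise. The matrix $\widehat L(p)$ has the form $\begin{pmatrix} 0 & A(p)\\ \overline{A(p)} & 0\end{pmatrix}$, which is Hermitian with eigenvalues $\pm|A(p)|$, so its operator norm is exactly $|A(p)|$. For $\lambda>0$ this gives
\begin{equation*}
\Big\|\frac{\widehat L(p)}{p^2+\lambda}\Big\| = \frac{|A(p)|}{p^2+\lambda}\le \frac{|A(p)|}{p^2}\,,
\end{equation*}
so it remains to control $\sup_p |A(p)|/(p^2+\lambda)$ as $\lambda\to+\infty$. To get relative boundedness one first checks this supremum is finite: by assumption \eqref{A1} there is a compact set outside of which $|A(p)|/p^2<1$, while on that compact set $|A(p)|$ is bounded by continuity, so the supremum of $|A(p)|/(p^2+\lambda)$ is finite for every $\lambda>0$; this yields $\dom(-\Delta_2)\subseteq\dom L$ together with an inequality $\|Lf\|\le \alpha_\lambda\|(-\Delta_2)f\|+\beta_\lambda\|f\|$, giving relative boundedness.

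Getting the bound below $1$. The quantitative claim is that $\sup_p |A(p)|/(p^2+\lambda)<1$ for all large $\lambda$. Fix $\varepsilon>0$ with $\limsup_{p\to\infty}|A(p)|/p^2<1-\varepsilon$ from \eqref{A1}; then there is $R$ so that $|A(p)|\le(1-\varepsilon)p^2$ for $|p|>R$, and for such $p$ the quotient is at most $(1-\varepsilon)p^2/(p^2+\lambda)<1-\varepsilon<1$. On the complementary ball $|p|\le R$ the numerator is bounded by $\sup_{|p|\le R}|A(p)|=:C<\infty$, so the quotient is at most $C/(R^2+\lambda)$, which tends to $0$ and hence falls below $1$ once $\lambda$ is large. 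Taking the maximum of the two regimes shows the supremum stays strictly below $1$ for $\lambda$ large enough. The only mild subtlety — and the point I would treat with care rather than as the main obstacle — is the interplay between the $\limsup$ condition \eqref{A1} and the uniform control near infinity; this is precisely what lets me separate the estimate into the large-$|p|$ and bounded-$|p|$ regions and conclude uniformly.
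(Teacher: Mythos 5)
Your proposal is correct and follows essentially the same route as the paper: pass to the Fourier picture, reduce to the scalar estimate $\sup_p |A(p)|/(p^2+\lambda)<1$, and split into the region $|p|>R$ (handled by the $\limsup$ assumption \eqref{A1}) and the compact ball $|p|\le R$ (handled by continuity of $A$ and taking $\lambda$ large). The only differences are cosmetic — you make explicit the facts that the paper leaves implicit, namely that $\|\widehat L(p)\|=|A(p)|$ since the matrix is Hermitian with eigenvalues $\pm|A(p)|$, and that finiteness of the supremum yields the relative boundedness the paper dismisses as obvious.
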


\begin{proof} The relative boundedness is obvious, so we only prove the norm estimate.
Passing to the Fourier transform, we need to show
\begin{equation}
             \label{eq-an}
\sup_{p\in\RR^2} \Big|
\dfrac{A(p)}{p^2+\lambda}
\Big|<1, \quad \lambda\to+\infty.
\end{equation}

By \eqref{A1}, there exist $a<1$ and $R>0$ such that
$|A(p)|/ p^2\le a$ for all $p$ with $|p|>R$.
Then obviuosly one has 
\begin{equation}
   \label{ea-estz}
\Big|
\dfrac{A(p)}{p^2+\lambda}
\Big|<a,\quad |p|>R, \quad \lambda>0.
\end{equation}
Due to the continuity of $A$ there exists $C>0$ with $|A(p)|\le C$
for $|p|\le R$. Then obviously there exists $\lambda_0>0$ such that
\begin{equation}
   \label{ea-estz2}
\Big|
\dfrac{A(p)}{p^2+\lambda}
\Big|\le C \lambda^{-1}<a,\quad |p|\le R, \quad \lambda>\lambda_0.
\end{equation}
Combining \eqref{ea-estz} with  \eqref{ea-estz2} we arrive at \eqref{eq-an}.
\end{proof}

Eq. \eqref{eq-inf} and Lemma~\ref{lem1} imply that, by the KLMN theorem,
the quadratic form
\begin{multline*}
q(f,g)=q_0(f,g)+ \nu(f,g),\\
\nu(f,g):=\langle h J_2 f, J_2 g \rangle_{L^2(\RR^2)\otimes\CC^2}\equiv
\int_{\RR^2} \langle J_2 f (x), J_2 g (x)\rangle_{\CC^2} \nu(dx),\\
\nu(dx)= h(x) m(dx),
\end{multline*}
where $q_0$ is the quadratic form associated with $H_0$,
is semibounded below and closed on $H^1(\RR^2)\otimes\CC^2$
and hence defines a certain self-adjoint operator $H$
semibounded below. If the measure $\nu$ is absolutely continuous
with respect to the Lebesgue measure, i.e. $\nu(dx)= V(x)dx$
with a certain locally integrable function $V$, then the above procedure
gives the usual form sum $H=H_0+V$, so one preserves the same notation for the general case,
$H_0+\nu:=H$.

Repeating the procedure from Section~2 in \cite{BEKS}
one can express the resolvent of $H$ through the resolvent $H_0$.
Combining this with the explicit expressions for the Green function
for $H_0$ \cite{BGP2} one can obtain rather detailed formulas
for the Green function of $H$, but we will not need this below.
The following assertion about the spectral properties of $H$ is important for us.

\begin{thm}
If the measure $m$ is finite, then the essential spectra of $H_0$
and $H$ coincide.
\end{thm}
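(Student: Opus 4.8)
The plan is to show that the difference of resolvents $(H-\lambda)^{-1}-(H_0-\lambda)^{-1}$ is compact for one point $\lambda$ in the common resolvent set; Weyl's theorem on the stability of the essential spectrum then yields $\spec_{\mathrm{ess}}H=\spec_{\mathrm{ess}}H_0$. Fix a real $\lambda<\varkappa$ so negative that $\lambda\in\rho(H_0)\cap\rho(H)$ (both operators are bounded below), and abbreviate $R_0:=(H_0-\lambda)^{-1}$. Repeating the construction of Section~2 in \cite{BEKS}, now with the bounded trace $J_2$ and the bounded multiplication by $h$ in place of their scalar analogues, one obtains a Birman--Schwinger-type resolvent formula in which $(H-\lambda)^{-1}-R_0$ is a product of the operator $J_2R_0:\HH\to L^2(\RR^2,m)\otimes\CC^2$, a bounded operator on $L^2(\RR^2,m)\otimes\CC^2$, and the adjoint $(J_2R_0)^*$. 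Since $h$ is bounded, the middle factor is bounded, and it therefore suffices to prove that the single operator $T:=J_2R_0$ is compact.

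I would prove compactness of $T$ by showing it is Hilbert--Schmidt, and this is exactly where the finiteness of $m$ enters. Because $H_0$ is a Fourier multiplier it is translation invariant, so $R_0$ is convolution by the matrix kernel $\mathcal G_\lambda$ obtained as the inverse Fourier transform of $p\mapsto(\widehat H_0(p)-\lambda)^{-1}$. For our $\lambda<\varkappa$ this multiplier satisfies $\|(\widehat H_0(p)-\lambda)^{-1}\|\le(\varkappa-\lambda)^{-1}$ uniformly and, by \eqref{A1}, is of size $O(|p|^{-2})$ as $|p|\to\infty$; since $\int_{|p|>1}|p|^{-4}\,dp<\infty$ in two dimensions, it belongs to $L^2(\RR^2;\CC^{2\times2})$, whence $\mathcal G_\lambda\in L^2(\RR^2;\CC^{2\times2})$ by Plancherel. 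Moreover $\dom H_0=H^2(\RR^2)\otimes\CC^2\subset C(\RR^2)\otimes\CC^2$ in dimension two, so each $R_0g$ has a continuous representative and $J_2$ acts on it simply as restriction to $\supp m$; hence $T$ has the integral kernel $(x,y)\mapsto\mathcal G_\lambda(x-y)$ with respect to $dy$ on the domain and $m(dx)$ on the range.

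The Hilbert--Schmidt norm is then computed directly,
\[
\|T\|_{\mathrm{HS}}^2=\int_{\RR^2}\Big(\int_{\RR^2}\|\mathcal G_\lambda(x-y)\|^2\,dy\Big)\,m(dx)=\|\mathcal G_\lambda\|_{L^2(\RR^2)}^2\,m(\RR^2),
\]
where the inner integral is independent of $x$ by translation invariance. Since $m$ is finite and $\mathcal G_\lambda\in L^2$, this quantity is finite, so $T$ is Hilbert--Schmidt and in particular compact; consequently the resolvent difference is compact and the two essential spectra coincide. The main point to carry out carefully --- the only real obstacle --- is the reduction of the first paragraph: one must verify that the abstract resolvent formula of \cite{BEKS} transfers verbatim to the present matrix setting and that $J_2R_0g$ genuinely coincides with the pointwise restriction of the continuous function $R_0g$, so that the kernel representation, and hence the clean Hilbert--Schmidt bound, is legitimate; everything else is the routine application of Weyl's theorem.
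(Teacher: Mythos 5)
Your proof is correct in substance, and it reaches the paper's conclusion by a genuinely different route at both stages, so a comparison is worthwhile. Both arguments pivot on exactly the same object --- compactness of $J_2(H_0-\lambda)^{-1}:\HH\to L^2(\RR^2,m)\otimes\CC^2$ --- but they get to it and past it differently. For the reduction, the paper never invokes Weyl's theorem on resolvent differences: it cites Theorem~7 of \cite{br}, an abstract result on form perturbations which says that compactness of $J_2(H_0-z)^{-1}$ \emph{alone} already yields stability of the essential spectrum; this sidesteps having to establish the Krein-type resolvent formula of \cite{BEKS} in the matrix setting, which is exactly the step you flag as your ``only real obstacle.'' (Note also that in your first paragraph the boundedness of the middle factor is not merely ``because $h$ is bounded'': one needs invertibility of $1$ plus a Birman--Schwinger-type operator, which holds for sufficiently negative $\lambda$ thanks to the form bound \eqref{eq-inf} with $a<1$; that is part of what ``repeating the procedure of \cite{BEKS}'' must deliver.) For the compactness itself, the paper factorizes $(H_0+\lambda)^{-1}=(-\Delta_2+\lambda)^{-1}\big(1+L(-\Delta_2+\lambda)^{-1}\big)^{-1}$ via Lemma~\ref{lem1}, reducing everything to the scalar free Laplacian, and then cites Lemma~2.3 of \cite{BEKS} for compactness of the Green-function operator $B$. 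You instead work with the full matrix resolvent: assumption \eqref{A1} makes the symbol $(\widehat H_0(p)-\lambda)^{-1}$ square integrable (it is bounded by $(\varkappa-\lambda)^{-1}$ and is $O(|p|^{-2})$ at infinity, which is $L^2$ in two dimensions), so $\mathcal G_\lambda\in L^2$ and $\|J_2 R_0\|_{\mathrm{HS}}^2=m(\RR^2)\,\|\mathcal G_\lambda\|_{L^2}^2<\infty$. Your version is stronger (Hilbert--Schmidt rather than merely compact), quantitative, and makes the role of $m(\RR^2)<\infty$ completely transparent, with no need for Lemma~\ref{lem1} or the BEKS compactness lemma; the price is the kernel-representation step you correctly single out --- that $J_2$ applied to $R_0g\in H^2(\RR^2)\otimes\CC^2$ really is pointwise restriction of the continuous representative, which follows from density of $\Sch(\RR^2)$ in $H^2(\RR^2)$, the Sobolev embedding, and once more the finiteness of $m$ --- together with the reliance on transferring the \cite{BEKS} resolvent formula to the matrix case, where the paper's citation of \cite{br} is the more economical device.
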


\begin{proof}
The paper \cite{br} deals with a rather detailed spectral
analysis of operators defined by the sums of quadratic forms.
According to Theorem 7 in \cite{br} we only need to prove that
the operator $J_2 (H_0-z)^{-1}:L^2(\RR^2)\otimes\CC^2\to L^2(\RR^2,m)\otimes\CC^2$
is compact. Note that that for sufficiently large $\lambda$
the operator $1 +L(-\Delta_2+\lambda)^{-1}$ has a bounded inverse
defined everywhere due to lemma~\ref{lem1}.
Hence $(H_0+\lambda)^{-1}=(-\Delta_2+\lambda)^{-1}\big(1 +L(-\Delta_2+\lambda)^{-1}\big)^{-1}$,
and the compactness of $J_2 (H_0+\lambda)^{-1}$ would follow from
the compactness of $J (-\Delta+\lambda)^{-1}: L^2(\RR^2)\to L^2(\RR^2,m)$
as  $J_2 (-\Delta_2+\lambda)^{-1}\equiv
\big(J (-\Delta+\lambda)^{-1}\big)\otimes 1$.
At the same time, $J (-\Delta+\lambda)^{-1}=B^*$,
\[
B:L^2(\RR^2,m)\to L^2(\RR^2),\quad
B f (x)=  \int_{\RR^2} G_0(x,y;\lambda) f(y) m(dy) \quad \text{a.e.},
\]
where $G_0$ is the Green function of the two-dimensional Laplacian.
In Lemma~2.3 in \cite{BEKS} it was shown that $B$ is compact.
Therefore, $J (-\Delta+\lambda)^{-1}$ is also compact, and the theorem is proved.
\end{proof}

{}From now on we assume that the measure $m$ is finite.

\section{General perturbations}

Below for a distribution $f$ we denote the Fourier transform of $f$ by
$\widehat f$. We call an Hermitian $n\times n$ matrix $C$
\emph{positive definite} if for any non-zero $\xi\in\mC^n$ there holds $\langle \xi,C\xi\rangle>0$,
and \emph{positive semi-definite} if the above equality is non-strict.
By analogy one introduces \emph{negative definite} and \emph{negative semi-definite}
matrices.

The following result differs only in minor details from the main result
in \cite{BGP2}

\begin{thm}\label{th1}
Let $N\in\NN$; assume that the Fourier transform
$\widehat \nu$ satisfies the following
condition: there are $N$ points $p_1,\ldots p_N\in S$ such that
the matrix $\big(\widehat \nu(p_j-p_k)\big)_{j,k=1}^N$ is
negative definite. Then $H$ has at least $N$ eigenvalues, counting
multiplicity, below $\varkappa$.
\end{thm}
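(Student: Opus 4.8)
The plan is to invoke the min-max principle. By the preceding theorem the finiteness of $m$ gives $\spec_{\mathrm{ess}}H=\spec_{\mathrm{ess}}H_0=[\varkappa,+\infty)$, so it suffices to exhibit an $N$-dimensional subspace $\mathcal L$ of the form domain $H^1(\RR^2)\otimes\CC^2$ on which $q(f,f)<\varkappa\|f\|^2$ for every nonzero $f$; then $\sup_{f\in\mathcal L}q(f,f)/\|f\|^2<\varkappa$ by compactness of the unit sphere of $\mathcal L$, the $N$-th min-max value falls strictly below $\inf\spec_{\mathrm{ess}}H$, and $H$ has at least $N$ eigenvalues below $\varkappa$. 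For the trial functions I let $\psi_j:=\psi_-(p_j)\in\CC^2$ be the normalized lower eigenvector of $\widehat H_0(p_j)$ (a column of $M^*(p_j)$) and let $g_R$ be the two-dimensional logarithmic cut-off: $g_R=1$ on $|x|\le 1$, $g_R(x)=1-\log|x|/\log R$ on $1\le|x|\le R$, and $g_R=0$ on $|x|\ge R$. I set $f_j(x)=e^{\rmi p_j\cdot x}g_R(x)\psi_j$, so $\widehat{f_j}(p)=\widehat{g_R}(p-p_j)\psi_j$ is concentrated near $p_j$ in the lower band, and take $f=\sum_{j=1}^N c_j f_j\in H^1(\RR^2)\otimes\CC^2$.

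For the kinetic part I write $q_0(f_j,f_j)-\varkappa\|f_j\|^2=\int_{\RR^2}\big(\langle\psi_j,\widehat H_0(p)\psi_j\rangle-\varkappa\big)\,|\widehat{g_R}(p-p_j)|^2\,dp$. The density vanishes at $p=p_j$ since $\lambda_-(p_j)=\varkappa$; its gradient vanishes because $\nabla\lambda_-(p_j)=\langle\psi_j,\nabla\widehat H_0(p_j)\psi_j\rangle=0$ at the minimizer $p_j\in S$; and the $C^2$ regularity \eqref{A2} (which also yields \eqref{S4}) bounds the remainder by $C|p-p_j|^2$. Hence $q_0(f_j,f_j)-\varkappa\|f_j\|^2\le C\|\nabla g_R\|^2$, off-diagonal terms are controlled by Cauchy--Schwarz, and the decisive fact is the two-dimensional criticality $\|\nabla g_R\|^2=2\pi/\log R\to 0$. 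The whole kinetic contribution is therefore $O(1/\log R)$, uniformly for $\|c\|=1$.

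For the potential part, $\langle f_j(x),f_k(x)\rangle_{\CC^2}=|g_R(x)|^2 e^{\rmi(p_k-p_j)\cdot x}\langle\psi_j,\psi_k\rangle$, so $\nu(f_j,f_k)=\langle\psi_j,\psi_k\rangle\int_{\RR^2}e^{\rmi(p_k-p_j)\cdot x}|g_R(x)|^2\,\nu(dx)$. Since $m$ is finite, $h$ bounded and $g_R\to 1$ pointwise with $|g_R|\le 1$, dominated convergence gives $\nu(f_j,f_k)\to\langle\psi_j,\psi_k\rangle\,\widehat\nu(p_k-p_j)$ as $R\to\infty$ (in the normalization of $\FF$; another normalization only conjugates or rescales the matrix and does not affect definiteness). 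The limiting matrix is the Hadamard product of the Gram matrix $G=(\langle\psi_j,\psi_k\rangle)$, positive semi-definite with unit diagonal, and $(\widehat\nu(p_k-p_j))=\overline{(\widehat\nu(p_j-p_k))}$, which is negative definite by hypothesis; since the Hadamard product of a negative-definite matrix with a positive-semidefinite matrix having positive diagonal is negative definite (Schur product theorem), the limiting potential matrix is negative definite. Thus $\nu(f,f)\le-\delta\|c\|^2$ for some $\delta>0$ and all large $R$, uniformly for $\|c\|=1$.

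Adding the two estimates, $q(f,f)-\varkappa\|f\|^2\le C/\log R-\delta<0$ for all $\|c\|=1$ once $R$ is large, while the Gram matrix $(\langle f_j,f_k\rangle)=(\widehat{|g_R|^2}(p_j-p_k))$ is diagonally dominated by $\|g_R\|^2$, hence nonsingular, so $\dim\mathcal L=N$. The min-max principle then produces $N$ eigenvalues below $\varkappa$. I expect the main obstacle to be exactly the balance of scales: isotropically concentrated trial functions make the kinetic and potential contributions of the same order, so no sign can be forced. The resolution is the two-dimensional logarithmic mechanism, which sends the kinetic part to zero while the finite measure pins the potential part at a fixed negative value, combined with the Schur product step that transfers negative definiteness of $(\widehat\nu(p_j-p_k))$ to the spinor-weighted matrix actually tested by $\nu$.
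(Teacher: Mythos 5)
Your route is essentially the paper's --- min--max, trial spinors concentrated in momentum around the points $p_j$, a two-dimensionally critical profile whose Dirichlet energy vanishes (your $g_R$ with $\|\nabla g_R\|^2=2\pi/\log R$ plays exactly the role of the paper's $f_a(x)=e^{-|x|^a/2}$ with $\|\nabla f_a\|^2=\tfrac{\pi}{2}a$), and entrywise convergence of the potential matrix to a multiple of $\bigl(\widehat\nu(p_j-p_k)\bigr)$; your Schur-product step (positive definite Hadamard-times positive semi-definite with nonvanishing diagonal is positive definite) is correct and is the honest price of your construction. The genuine gap is in the kinetic estimate. You freeze the spinor, $\psi_j=\psi_-(p_j)$, and claim $F_j(p):=\langle\psi_j,\widehat H_0(p)\psi_j\rangle-\varkappa\le C|p-p_j|^2$, justifying $\nabla F_j(p_j)=0$ by Hellmann--Feynman and the remainder by \eqref{A2}. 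But \eqref{A2} only makes $|A|$ (hence $\lambda_-$) $C^2$ near $S$; the function $A$ itself --- in particular its phase --- is merely continuous, so $\widehat H_0(\cdot)$ need not be differentiable at all. Writing $A(p)=|A(p)|e^{\rmi\theta(p)}$ and $\theta_j=\theta(p_j)$ (this already presumes $A(p_j)\neq 0$, which is also needed for $\psi_-(p_j)$ to be well defined), one computes
\begin{equation*}
F_j(p)=\bigl(\lambda_-(p)-\varkappa\bigr)+|A(p)|\bigl(1-\cos(\theta(p)-\theta_j)\bigr),
\end{equation*}
and while the first term is $\le c(p_j)|p-p_j|^2$ by \eqref{S4}, the second term is only $o(1)$ as $p\to p_j$, with no rate whatsoever.

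This is not a removable technicality, because your trial functions have divergent norm: $\|g_R\|^2$ is of order $R^2/\log^2 R$, so an error density that merely vanishes at $p_j$, integrated against $|\widehat g_R(p-p_j)|^2$, need not go to zero. Concretely, if $\theta(p)-\theta_j\sim|p-p_j|^{1/2}$ near $p_j$ (perfectly compatible with the hypotheses of Theorem~\ref{th1}), then $F_j(p)\ge c\,|p-p_j|$ near $p_j$, and $\int F_j(p)|\widehat g_R(p-p_j)|^2dp$ is bounded below by a multiple of $\int_{|q|\le 1}|q|\,|\widehat g_R(q)|^2dq$, which is comparable to the $H^{1/2}$-seminorm of $g_R$ squared and diverges like $R/\log^2R$; the kinetic part then swamps the bounded negative potential part and your form matrix is not negative definite for large $R$. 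The paper avoids this precisely by \emph{not} freezing the spinor: the trial functions are defined in momentum space as $\widehat\Psi_j(p)=M(p)\bigl(0,\widehat f_a(p-p_j)\bigr)^{T}$ with the diagonalizing matrix $M(p)$ of \eqref{S3}, so the kinetic density is exactly $\lambda_-(p)-\varkappa$ and \eqref{S4} alone suffices. Your argument does prove the theorem under the stronger assumption that $A$ itself is $C^2$ and nonvanishing near $S$ (which covers Rashba and Dresselhaus), but not under the stated hypotheses; the repair is to replace $\psi_-(p_j)$ by the $p$-dependent lower-band spinor $M^*(p)(0,1)^T$, after which the Gram factor, and with it your Schur-product step, disappears.
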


\begin{proof} According to the max-min
principle, it is sufficient to show that  we
can find $N$ vectors $\Psi_m\in\HH$, $m=1,\ldots,N$, such that the
matrix with the entries $(q-\varkappa)(\Psi_j,\Psi_k)$,
$j,k=1,\dots, N$, $(q-\varkappa)(\Phi,\Psi):=q(\Phi,\Psi)-\varkappa\langle\Phi,\Psi\rangle$
is negative definite.

Set $\displaystyle
f_a(x):=\exp\left(-\dfrac{1}{2}|x|^a\right)$, $x\in\RR^2$,
with $a>0$. Clearly, $f_a\in H^1(\RR^2)$.
It is observed in~\cite{YdL} that
\begin{equation}
              \label{S5}
\int_{\RR^2}\big|\nabla\,f_a(x)\big|^2\,dx=\frac{\pi}{2}a.
\end{equation}
Furthemore, by the Lebesgue dominated convergence theorem,
\[
\lim\limits_{a\to0+}\,\displaystyle\int_{\RR^2} |f_a(x)|^2\,\nu(dx)=e^{-1}\,\displaystyle\int_{\RR^2}
\nu(dx).
\]
Let $\widehat f_a$ be the Fourier transform of $f_a$.
Take spinors $\Psi_j$ such that their Fourier transforms
$\widehat\Psi_j$ are of the form
$\widehat\Psi_j(p)=M(p)\psi_j(p)$, where
\begin{equation}
              \label{S7}
\psi_j(p)=\left(\begin{matrix}0\\
\widehat
f_a(p-p_j)\\\end{matrix}\right)
\end{equation}
and $M(p)$ is taken from \eqref{S3}.
We show that if $a$ is sufficiently small, then the
matrix $(q-\varkappa)(\Psi_j,\Psi_k)$ is
negative definite.
For this purpose it is sufficient to show that
\begin{gather}
\label{f1}
 \lim_{a\to0} (q_0-\varkappa)(\Psi_j,\Psi_k)=0,\\
\label{f2} \lim_{a\to0}\nu (\Psi_j,\Psi_k)=2\pi
e^{-1}\widehat \nu(p_j-p_k)
\end{gather}
for all $j$ and $k$.

By definition of $\Psi_j$ one has
\begin{multline*}
\big|(q_0-\varkappa)(\Psi_j,\Psi_k)\big|=\Big|\int_{\RR^2}
(\lambda_-(p)-\varkappa)\overline{\widehat f_a(p-p_j)}
f_a(p-p_k)\,dp\Big|\\
\le\sqrt{\int_{\RR^2} \big(\lambda_-(p)-\varkappa\big)
\,\big|\widehat f_a(p-p_j)\big|^2\,dp}\\
\times
\sqrt{\int_{\RR^2}
\big(\lambda_-(p)-\varkappa\big)\,\big|\widehat
f_a(p-p_k)\big|^2\,dp}.
\end{multline*}
On the other hand, by \eqref{S4} and \eqref{S5} one has
\begin{multline*}
0\le \int_{\RR^2} \big(\lambda_-(p)-\varkappa\big)\,\big|\widehat f_a(p-p_j)\big|^2\,dp\\
\le c(p_j)\int_{\RR^2} (p-p_j)^2\,\big|\widehat f_a(p-p_j)\big|^2\,dp
= c(p_j) \int_{\RR^2} p^2\big|\widehat f_a(p)\big|^2\,dp\\
= c(p_j) \int_{\RR^2}\big|\nabla\,f_a(x)\big|^2\,dx
=\frac{\pi}{2}c(p_j)a,
\end{multline*}
which proves \eqref{f1}.
As for \eqref{f2}, one has
\begin{multline*}
\nu(\Psi_j,\Psi_k)= \int_{\RR^2}\int_{\RR^2}
\widehat \nu(p-q)
\big\langle\widehat
\Psi_j(p),\widehat\Psi_k(q)\big\rangle_{\CC^2}
dp\,dq\\
=\int_{\RR^2}\int_{\RR^2}
\widehat \nu(p-q)
\big\langle\widehat \psi_j(p),\widehat\psi_k(q)\big\rangle_{\CC^2}
dp\,dq.
\end{multline*}
On the other hand,
\begin{multline*}
\int_{\RR^2}\int_{\RR^2}
\widehat \nu(p-q)
\big\langle\widehat \psi_j(p),\widehat\psi_k(q)\big\rangle_{\CC^2}
dp\,dq\\
=\int_{\RR^2}\int_{\RR^2} \widehat \nu(p-q)\overline{\widehat f_a(p-p_j)}
\widehat f_a(q-p_k)\,dp\,dq\\
= \int_{\RR^2} e^{i\langle p_j-p_k,x\rangle}\big|f_a(x)\big|^2\,\nu(dx)
\stackrel{a\to 0}{\longrightarrow} 2\pi
e^{-1}\widehat \nu(p_j-p_k),
\end{multline*}
and the theorem is proved.
\end{proof}

The above theorem allows to generalize some classical results.
For example, the condition 
\[
\displaystyle \int_{\RR^2}
\nu(dx)\equiv 2\pi \widehat \nu(0) <0.
\]
guarantees that $H$ has at least one eigenvalue below
$\varkappa$.

To formulate further corollaries, by $\#S$ we denote
the number of points in $S$, if $S$ is finite, and $\infty$, otherwise.

\begin{cor}\label{boch} Let $\nu\le0$ and $\supp \nu$ have a positive
Lesbegue measure. Then $H$ has at least $\# S$ eigenvalues below $\kappa$.
\end{cor}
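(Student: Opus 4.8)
The plan is to deduce the corollary directly from Theorem~\ref{th1}, which reduces everything to producing, for each relevant $N$, a collection of points $p_1,\dots,p_N\in S$ for which the Hermitian matrix $C:=\big(\widehat\nu(p_j-p_k)\big)_{j,k=1}^N$ is negative definite. I would split into two cases according to the value of $\#S$. If $\#S=N<\infty$, I take $p_1,\dots,p_N$ to be \emph{all} (distinct) points of $S$; establishing negative definiteness then yields $N=\#S$ eigenvalues below $\varkappa$. If $\#S=\infty$, I would show that for every $N\in\NN$ one can choose $N$ distinct points $p_1,\dots,p_N\in S$ making $C$ negative definite; Theorem~\ref{th1} then gives at least $N$ eigenvalues below $\varkappa$ for every $N$, hence infinitely many. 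Thus in both cases it suffices to prove that $C$ is negative definite whenever the $p_1,\dots,p_N$ are distinct.

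The core computation is to rewrite the quadratic form of $C$. Using the normalization of the Fourier transform implicit in \eqref{f2}, namely $\widehat\nu(\xi)=\frac{1}{2\pi}\int_{\RR^2}e^{\rmi\langle\xi,x\rangle}\,\nu(dx)$, I obtain for any $\xi=(\xi_1,\dots,\xi_N)\in\CC^N$
\[
\langle\xi,C\xi\rangle
=\frac{1}{2\pi}\sum_{j,k=1}^N\overline{\xi_j}\,\xi_k\int_{\RR^2}e^{\rmi\langle p_j-p_k,x\rangle}\,\nu(dx)
=\frac{1}{2\pi}\int_{\RR^2}\big|g(x)\big|^2\,\nu(dx),
\]
where $g(x):=\sum_{k=1}^N\xi_k\,e^{-\rmi\langle p_k,x\rangle}$. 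Since $\nu\le0$, this integral is non-positive for every $\xi$, so $C$ is at least negative semi-definite; the whole point is to upgrade this to strict negativity when $\xi\ne0$.

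The hard part is precisely this strict inequality, and it is where the hypothesis on $\supp\nu$ enters. For distinct frequencies $p_1,\dots,p_N$ the exponentials $x\mapsto e^{-\rmi\langle p_k,x\rangle}$ are linearly independent, so $g\not\equiv0$ as soon as $\xi\ne0$. Moreover $g$ is real-analytic on the connected set $\RR^2$, hence its zero set $Z:=\{x:g(x)=0\}$ is closed and of zero Lebesgue measure. Writing $\nu=-|\nu|$ with $|\nu|$ a positive measure, I have $\langle\xi,C\xi\rangle=-\frac{1}{2\pi}\int_{\RR^2}|g|^2\,d|\nu|$, which vanishes only if $g=0$ holds $|\nu|$-almost everywhere, i.e. only if $|\nu|$ is concentrated on the closed set $Z$. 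In that case $\supp\nu=\supp|\nu|\subseteq Z$ would have zero Lebesgue measure, contradicting the assumption that $\supp\nu$ has positive Lebesgue measure. Hence $\langle\xi,C\xi\rangle<0$ for all $\xi\ne0$, so $C$ is negative definite, and the conclusion follows from Theorem~\ref{th1}. The only delicate points to verify carefully are the measure-zero property of the analytic zero set $Z$ and the identification $\supp\nu=\supp|\nu|$.
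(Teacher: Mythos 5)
Your proposal is correct and follows essentially the same route as the paper's own proof: reduce to Theorem~\ref{th1}, obtain negative semi-definiteness of $\big(\widehat\nu(p_j-p_k)\big)$ from $\nu\le0$ (the Bochner step, which you carry out by direct computation), and upgrade to strict definiteness by combining the real analyticity of the trigonometric sum, the linear independence of distinct exponentials, and the positive Lebesgue measure of $\supp\nu$. The only cosmetic difference is that you phrase the analyticity step contrapositively (the zero set of a nonzero real-analytic function is Lebesgue-null, so $\supp\nu$ cannot sit inside it), whereas the paper argues that an analytic function vanishing on a set of positive Lebesgue measure must vanish identically.
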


\begin{proof}
We show that the matrix $\big(\widehat \nu(p_j-p_k)\big)_{j,k=1}^N$ is
negative definite for any choice of pairwise different points
$p_1,\ldots,p_N\in\RR^2$. By the Bochner theorem,
$\displaystyle-\sum_{j,k}\widehat \nu(p_j-p_k)\bar\xi_j\xi_k\ge0$
for any $(\xi_j)\in\CC^N$ and it remains to note that
$\displaystyle\sum_{j,k}\widehat \nu(p_j-p_k)\bar\xi_m\xi_n\ne0$
for $(\xi_j)\ne0$. In fact, if $\displaystyle\sum_{j,k}\widehat
\nu(p_j-p_k)\bar\xi_j\xi_k=0$, then
\[
\int_{\RR^2}\Big|\sum_{j}\xi_j e^{i \langle p_j, x\rangle}\Big|^2\, \nu(dx)=0\,;
\]
therefore, $\displaystyle\sum_{j}\xi_j e^{i\langle p_j, x\rangle}=0$ on the
support of $\nu$. As the exponents $e^{i\langle p_j, x\rangle}$ are real
analytic, and $\displaystyle\sum_{j}\xi_j e^{i \langle p_j, x\rangle}=0$ on a
set of positive Lebesgue measure, the equality
$\displaystyle\sum_{j}\xi_j e^{i\langle p_j, x\rangle}=0$ is valid everywhere
on $\RR^2$. On the other hand, $e^{i\langle p_j, x\rangle}$ are linearly
independent, and we obtain $\xi_j=0$ for all $j$.
\end{proof}

Corollary \ref{boch} shows the existence of infintely many
eigenvalues below $\varkappa$ for perturbations of the Rashba and Dresselhaus Hamiltonians
by negative measures with support of positive Lesbegue measure,
i.e. given by negative regular potentials,
the sum of of a negative regular potential and a negative
$\delta$-measure supported by a curve etc.
Of interest is the question if corollary~\ref{boch}
still holds for the case when the support of $m$ has zero Lesbegue measure,
as the above arguments do not work in that case.

In \cite{CFS} we have found an interesting condition permitting to handle
a class of singular perturbations without assumptions on the Lebesgue measure of the support.

\begin{cor}\label{boch2} Let $\nu\le0$ and the intersection
of $\supp \nu$ with a certain circle be an infinite set.
Then $H$ has at least $\# S$ eigenvalues below $\kappa$.
In particular, this holds if $m$ is spherically symmetric and $h< 0$.
\end{cor}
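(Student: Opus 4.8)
The plan is to reduce the statement, via Theorem~\ref{th1}, to the assertion that for \emph{any} collection of pairwise distinct points $p_1,\dots,p_N\in\RR^2$ the matrix $\big(\widehat\nu(p_j-p_k)\big)_{j,k=1}^N$ is negative definite. Granting this, I would pick $N$ distinct points in $S$ (possible for every $N\le\#S$) and apply Theorem~\ref{th1} to obtain $N$ eigenvalues below $\varkappa$; since $N\le\#S$ is arbitrary, $H$ then has at least $\#S$ eigenvalues below $\varkappa$ (infinitely many when $S$ is infinite). As in the proof of Corollary~\ref{boch}, the hypothesis $\nu\le0$ makes $-\nu$ a positive measure, so Bochner's theorem already gives that the matrix is negative \emph{semi}-definite; the whole difficulty is to upgrade this to strict definiteness. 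Here the argument of Corollary~\ref{boch} is unavailable, because $\supp\nu$ may be Lebesgue-null, so a new input replacing ``positive Lebesgue measure'' by ``infinite intersection with a circle'' is required.

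Following Corollary~\ref{boch}, I would begin from the assumption that $\sum_{j,k}\widehat\nu(p_j-p_k)\bar\xi_j\xi_k=0$ for some $(\xi_j)\neq0$; this forces $\int_{\RR^2}\big|g(x)\big|^2\,(-\nu)(dx)=0$ with $g(x):=\sum_j\xi_j e^{\rmi\langle p_j,x\rangle}$, so the continuous function $g$ vanishes on $\supp\nu$. By hypothesis $\supp\nu$ meets some circle $\Gamma$ (centre $c$, radius $R$) in an infinite set, whence $g$ has infinitely many zeros on $\Gamma$.

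The heart of the matter — and the step I expect to be the main obstacle — is to deduce $(\xi_j)=0$ from the vanishing of $g$ on an infinite subset of a single circle, i.e.\ the linear independence of the restrictions $e^{\rmi\langle p_j,\cdot\rangle}\big|_\Gamma$, a set of Lebesgue measure zero on which the analyticity trick of Corollary~\ref{boch} says nothing. I would parametrise $\Gamma$ by $x=c+R(\cos\theta,\sin\theta)$ and set $F(\theta):=g(x(\theta))$; absorbing the constant phases $e^{\rmi\langle p_j,c\rangle}$ into the coefficients, $F$ is an entire function of the complex variable $\theta$ vanishing at infinitely many real points accumulating in the compact circle, hence $F\equiv0$ on $\CC$. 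To separate the exponentials I would evaluate along $\theta=\theta_0+\rmi t$ and let $t\to+\infty$: the modulus of the $j$-th term is governed by $e^{-R\,\langle p_j,e^\perp(\theta_0)\rangle\sinh t}$ with $e^\perp(\theta_0)=(-\sin\theta_0,\cos\theta_0)$, so after choosing $\theta_0$ with the numbers $\langle p_j,e^\perp(\theta_0)\rangle$ pairwise distinct (a generic direction works, since the $p_j$ are distinct) the term of smallest such value dominates, and dividing $F\equiv0$ by it forces its coefficient to vanish. Repeating in increasing order of $\langle p_j,e^\perp(\theta_0)\rangle$ kills all the $\xi_j$, the desired contradiction, which establishes strict negative definiteness and hence the eigenvalue count.

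Finally, for the concluding ``in particular'' I would note that when $m$ is spherically symmetric and $h<0$ one has $\supp\nu=\supp m$, and $\supp m$ is invariant under rotations about the origin; as soon as it contains a single point $x_0\neq0$ it contains the whole circle $\{|x|=|x_0|\}$, which then serves as the circle $\Gamma$ meeting $\supp\nu$ in an infinite set.
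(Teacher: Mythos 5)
Your proposal is correct, but at the decisive step it takes a genuinely different route from the paper. The paper's proof of Corollary~\ref{boch2} is a one-line citation: it invokes the result of zu Castell, Filbir and Szwarc \cite{CFS}, which states exactly that for a non-positive measure whose support meets some circle in an infinite set, the matrix $\big(\widehat\nu(p_j-p_k)\big)$ is negative definite for every choice of pairwise distinct $p_j$; the surrounding reduction (Theorem~\ref{th1} plus Bochner semi-definiteness plus the need to upgrade to strictness) is common to both arguments. What you do instead is supply a self-contained proof of the strictness: after observing, as in Corollary~\ref{boch}, that degeneracy of the quadratic form forces $g(x)=\sum_j\xi_j e^{\rmi\langle p_j,x\rangle}$ to vanish on $\supp\nu$, you use analyticity of $\theta\mapsto g\big(c+R(\cos\theta,\sin\theta)\big)$ to pass from vanishing on an infinite subset of the circle (zeros accumulating in a compact set) to vanishing on the whole circle, and then the complexification $\theta=\theta_0+\rmi t$ with $\theta_0$ generic---so that the numbers $\langle p_j,e^\perp(\theta_0)\rangle$ are pairwise distinct---to isolate a dominant exponential as $t\to+\infty$ and kill the coefficients one by one. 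I checked the computation: the modulus of the $j$-th term along $\theta_0+\rmi t$ is indeed $|\xi_j|\,e^{-R\langle p_j,e^\perp(\theta_0)\rangle\sinh t}$, the minimizing index is unique by the choice of $\theta_0$, and the induction in increasing order of these exponents is sound; in effect you have reproved the two-dimensional case of the theorem in \cite{CFS}, namely the linear independence of distinct exponentials restricted to a circle. Your approach buys self-containedness and makes the mechanism behind the citation explicit; the paper's buys brevity and the full strength of the $\RR^d$ result. A small point in your favor: your treatment of the ``in particular'' clause correctly notes that a point $x_0\ne0$ of $\supp m$ is needed to produce the circle, an implicit non-degeneracy assumption (excluding $m$ concentrated at the origin) that the paper's formulation glosses over.
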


\begin{proof}
In \cite{CFS} it is shown that under above assumptions
the matrix $\big(\widehat \nu(p_j-p_k)\big)$ is negative definite for any
choice of pairwise different $p_j$.
\end{proof}

\section{Perturbations supported by curves}

The above results allows for analysis of perturbations supported by curves only
in rather particular symmetric cases.
We consider in greater detail  perturbations supported by curves in this section.

\begin{lem}\label{zero}
Assume that

\begin{enumerate}
\item[(a)] $S$ contains a $C^1$ arc which is not an interval;

\item[(b)] $m$ is compactly supported;

\item[(c)] $h(x)<0$ for a.e. $x$ with respect to the measure $m$;

\item[(d)] The Fourier transform $\hat\nu\equiv\widehat{hm}$ vanishes at infinity
at least along a certain straight line, i.e. $\hat\nu(r\cos\alpha,r\sin\alpha)\to 0$ for $r\to\pm\infty$
and some $\alpha\in [0,2\pi)$.

\end{enumerate}
Then the assumptions of Theorem \ref{th1} are satisfied for any $N$.
\end{lem}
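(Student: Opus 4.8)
The plan is to translate negative definiteness into a linear-independence statement about exponentials, and then to defeat degeneracy by a slicing argument tied to the decay direction supplied by assumption (d).

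First I would record, exactly as in the proof of Corollary \ref{boch}, that for pairwise distinct $p_1,\dots,p_N$ one has
\[
\sum_{j,k}\widehat\nu(p_j-p_k)\bar\xi_j\xi_k=\frac{1}{2\pi}\int_{\RR^2}\Big|\sum_j\xi_je^{i\langle p_j,x\rangle}\Big|^2\nu(dx),
\]
which by (c) (so that $-\nu=|h|m\ge0$ with $|h|>0$ $m$-a.e.) is $\le0$ and vanishes iff $\sum_j\xi_je^{i\langle p_j,\cdot\rangle}=0$ in $L^2(m)$. Hence the matrix in Theorem \ref{th1} is negative definite precisely when the exponentials $e^{i\langle p_j,\cdot\rangle}$ are linearly independent in $L^2(m)$. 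Thus it suffices to prove that the linear span of $\{e^{i\langle p,\cdot\rangle}:p\in\gamma\}$ in $L^2(m)$ is infinite-dimensional, where $\gamma\subset S$ is the $C^1$ arc from (a); infinite-dimensionality then yields, for every $N$, $N$ points of $\gamma$ with linearly independent exponentials, verifying the hypotheses of Theorem \ref{th1}.

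Next I would extract the analytic content of (d). Writing $e_\alpha=(\cos\alpha,\sin\alpha)$, assumption (d) says that the one-dimensional Fourier transform of the push-forward $\mu:=(\langle\cdot,e_\alpha\rangle)_*\nu$ tends to $0$ at infinity; by the Wiener theorem $\mu$ then has no atoms, so $\nu$, and hence $m$, assigns zero mass to every line orthogonal to $e_\alpha$. Disintegrating $m$ along the lines parallel to $e_\alpha$, say $m=\int m_v\,d\rho(v)$ with $v=\langle x,e_\alpha^\perp\rangle$, this is exactly the statement that the slice measures $m_v$ are non-atomic for $\rho$-a.e.\ $v$. This non-atomicity is my substitute for the positive-Lebesgue-measure hypothesis used in Corollary \ref{boch}.

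Finally, suppose for contradiction that the span above is finite-dimensional, of dimension $D$, with basis $e^{i\langle q_1,\cdot\rangle},\dots,e^{i\langle q_D,\cdot\rangle}$, $q_l\in\gamma$. Since $s\mapsto e^{i\langle\gamma(s),\cdot\rangle}$ is a $C^1$ curve taking values in this fixed, hence closed, subspace of $L^2(m)$, its derivative stays in the subspace, i.e.\ $\langle\gamma'(s),x\rangle e^{i\langle\gamma(s),x\rangle}=\sum_l d_l(s)e^{i\langle q_l,x\rangle}$ for $m$-a.e.\ $x$ and some coefficients $d_l(s)$. Because $\gamma$ is not contained in a line (assumption (a)) I can choose $s$ with $\langle\gamma'(s),e_\alpha\rangle\ne0$, and, $s\mapsto\langle\gamma(s),e_\alpha\rangle$ being strictly monotone near such $s$, also with $\langle\gamma(s),e_\alpha\rangle\notin\{\langle q_l,e_\alpha\rangle\}_{l=1}^D$. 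Restricting the resulting identity to a slice $\{v=\mathrm{const}\}$ produces a one-variable quasipolynomial in $u=\langle x,e_\alpha\rangle$ in which the frequency $\langle\gamma(s),e_\alpha\rangle$ is distinct from all the others and carries the degree-one coefficient $\langle\gamma'(s),e_\alpha\rangle\,u+\mathrm{const}$; it is therefore not identically zero and, being real-analytic in $u$, vanishes only on a countable set. As a.e.\ $m_v$ is non-atomic, integrating over $v$ shows the identity fails on a set of positive $m$-measure, contradicting the displayed relation. The main obstacle is exactly this step: with $\supp m$ possibly of zero Lebesgue measure the global analyticity argument of Corollary \ref{boch} is unavailable, and the delicate point is to coordinate the curvature (a) with the decay direction (d) so that, after projecting onto a single line, the ``new'' exponential survives with a genuinely non-constant coefficient; keeping track of coincidences among the projected frequencies $\langle q_l,e_\alpha\rangle$ and using that $\gamma$ is only $C^1$ (so a single differentiation is all that is permitted) is where the care lies.
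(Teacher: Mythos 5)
Your opening reduction (negative definiteness of $\big(\hat\nu(p_j-p_k)\big)$ $\Longleftrightarrow$ linear independence of the exponentials $e^{i\langle p_j,\cdot\rangle}$ in $L^2(m)$, so that it suffices to show the span of $\{e^{i\langle p,\cdot\rangle}:p\in\gamma\}$ is infinite-dimensional) is correct, and is a genuinely different, position-side route from the paper's Fourier-side argument. But there is a genuine gap at the step where you extract information from assumption (d). Wiener's theorem applied to the push-forward $\mu=(\langle\cdot,e_\alpha\rangle)_*\nu$ gives exactly this: $m\big(\{x:\langle x,e_\alpha\rangle=c\}\big)=0$ for every fixed $c$. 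Your sentence ``this is exactly the statement that the slice measures $m_v$ are non-atomic for $\rho$-a.e.\ $v$'' is false: from $\int m_v(\{u=c\})\,d\rho(v)=0$ you get, for each fixed $c$, a $\rho$-null set of bad $v$'s, but this null set depends on $c$, and there are uncountably many $c$'s; the quantifiers cannot be interchanged. A concrete counterexample lying squarely inside the hypotheses of Lemma \ref{zero}: take $\Gamma=\{(t,t):t\in[0,1]\}$, $m=\delta_\Gamma$, $h\equiv-1$, $\alpha=0$. Then $\hat\nu(r,0)\to0$ as $r\to\pm\infty$ (this is Lemma \ref{c2}), every line $\{x_1=c\}$ indeed carries zero $m$-mass, and yet \emph{every} slice measure $m_v$ along the lines $\{x_2=v\}$ is the single point mass at $(v,v)$, i.e.\ purely atomic. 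So the non-atomicity you need fails precisely for the singular, curve-supported measures the lemma is meant to cover, and it fails for a direction $\alpha$ that assumption (d) is entitled to hand you.

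This gap is fatal to your concluding step: with atomic slices, the countable zero set of the quasipolynomial $Q_v(u)$ may carry \emph{all} of the mass of $m_v$ (the zeros $u_k(v)$ move with $v$, and their union over $v$ can be exactly a curve supporting $m$, as with the diagonal above), so ``integrating over $v$'' does not produce a set of positive $m$-measure on which the identity fails, and no contradiction results. For comparison, the paper uses (d) in a completely different way, on the Fourier side: compact support (b) makes $\hat\nu$ real-analytic; by Bochner's theorem and induction on $N$ it suffices to find points of $S$ with $\det\big(\hat\nu(p_j-p_k)\big)\ne0$; if this determinant vanished for all choices of points in $S$, then via Lemma \ref{lem2} (the difference set of a non-straight $C^1$ arc has an interior point) and analyticity it would vanish for all tuples, whereas taking the inductively found $p'_1,\dots,p'_{N-1}$ and letting $p_N=r(\cos\alpha,\sin\alpha)\to\infty$, condition (d) kills the off-diagonal entries of the last row and column and forces $\det\big(\hat\nu(p_j-p_k)\big)\to\hat\nu(0)\det\big(\hat\nu(p'_j-p'_k)\big)\ne0$ --- a contradiction. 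So (d) enters as decay of matrix entries at infinity, not as any non-atomicity statement about $m$; to salvage your approach you would need a genuinely new replacement for the slice argument, or else to switch to this Fourier-side mechanism.
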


We prove first a simple lemma.

\begin{lem}\label{lem2} Let $C$ be a $C^1$ arc in the
plane $\RR^2$ different from an interval.
Then the et $X=\{p-q:\,p,q\in
C\}\subset \RR^2$ has an interior point.
\end{lem}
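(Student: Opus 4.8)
The plan is to parametrize the arc and realize $X$ as the image of a smooth map whose differential is invertible at a suitable point. I would fix a regular $C^1$ parametrization $\gamma:[0,1]\to\RR^2$ of $C$ with $\gamma'(t)\neq 0$ for all $t$, and set
\[
F:[0,1]^2\to\RR^2,\qquad F(s,t)=\gamma(s)-\gamma(t),
\]
so that $X=F\big([0,1]^2\big)$. The map $F$ is $C^1$, and its Jacobian matrix at $(s,t)$ has columns $\gamma'(s)$ and $-\gamma'(t)$; hence $\det DF(s,t)=-\det\big[\gamma'(s),\gamma'(t)\big]$, which is nonzero precisely when the tangent vectors $\gamma'(s)$ and $\gamma'(t)$ are linearly independent. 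The idea is then to produce a point $(s_0,t_0)$ in the interior of the parameter square at which this determinant does not vanish and to invoke the inverse function theorem.

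The crucial step, and the place where the hypothesis that $C$ is \emph{not} an interval enters, is to exhibit two parameter values with linearly independent tangents. Since $\gamma'$ is continuous and nowhere zero, one may lift its direction to a continuous angle: write $\gamma'(t)=|\gamma'(t)|\big(\cos\theta(t),\sin\theta(t)\big)$ with $\theta:[0,1]\to\RR$ continuous. The arc $C$ is a straight segment exactly when the direction of $\gamma'$ is constant, that is, when $\theta$ is constant modulo $\pi$. As $C$ is assumed not to be an interval, $\theta$ is not constant modulo $\pi$, so there exist $s_0,t_0$ with $\theta(s_0)-\theta(t_0)\notin\pi\mathbb{Z}$; equivalently, $\gamma'(s_0)$ and $\gamma'(t_0)$ are linearly independent and $\det DF(s_0,t_0)\neq0$. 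I expect verifying this implication to be the main obstacle: the care is in ruling out the possibility that $\theta$ varies but only through directions that are all parallel (all differing by integer multiples of $\pi$). This is excluded because continuity forces the range of $\theta$ to be an interval, on which small nonzero angle differences — necessarily not integer multiples of $\pi$ — are realized.

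Finally, the set where $\det DF\neq0$ is open and nonempty, and since $(0,1)^2$ is dense in $[0,1]^2$ one may take $(s_0,t_0)\in(0,1)^2$. By the inverse function theorem $F$ restricts to a diffeomorphism of a neighborhood of $(s_0,t_0)$ onto an open neighborhood of $F(s_0,t_0)=\gamma(s_0)-\gamma(t_0)$ in $\RR^2$. This open neighborhood is contained in $X$, so $X$ has an interior point, as claimed.
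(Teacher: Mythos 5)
Your proof is correct and takes essentially the same route as the paper: parametrize the arc, form the difference map $F(s,t)=\gamma(s)-\gamma(t)$, locate a parameter pair where its Jacobian is nonzero, and apply the inverse function theorem to get an open subset of $X$. The only difference is that you spell out, via the continuous angle lift of $\gamma'$, why the hypothesis that $C$ is not an interval produces two linearly independent tangent vectors --- a step the paper simply asserts without justification.
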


\begin{proof}
Let $t\mapsto \big(x(t),y(t)\big)$, $t\in[0,1]$, is a regular
parametrization of $C$. Consider the mapping $f:(s,t)\mapsto
\big(x(t)-x(s),y(t)-y(s)\big)$, $s,t\in[0,1]$.
Clear, $X=f\big([0,1]\times[0,1]\big)$.
Since $C$ is not an interval, the Jacobian
\[
\det\begin{pmatrix} -x'(s)&-y'(s)\\
 x'(t)&y'(t)
\end{pmatrix}
\]
does not vanish at a certain point
$(s_0,t_0)\in(0,1)\times(0,1)$. By the implicit function theorem,
$f$ is invertible near $(s_0,t_0)$, therefore the image of some
neighborhood of $(s_0,t_0)$ is an open set.
\end{proof}

\begin{proof}[\bf Proof of Lemma \ref{zero}]
The assumption (b) implies the analyticity of $\hat\nu$.
We show that for any $N$ there are points
$p_j\in S$, $j=1,\ldots,N$, such that the matrix
$\VV(p_1,\ldots,p_n)$ with entries
$\VV_{jk}=\hat\nu(p_j-p_k)$ is negative definite.
By the Bochner theorem, $\VV$ is negative semi-definite
for any choice of $p_j$, and it remains to
show that $\det \VV(p_1,\ldots,p_N)\ne0$ for a choice of
points $p_j$ from $S$.

Now we proceed by induction on $N$. For $N=1$ one has,
by (c), $\det \VV=\Hat\nu(0)<0$.

Let now $N\ge 2$. Assume that there exist $p'_j\in S$, $j=1,\ldots,N-1$ such that  $\det \VV(p'_1,\ldots,p'_{N-1})\ne0$.

Note that $\VV(p_1,\ldots,p_N)$ is a function of $N-1$
variables $q_j$, $q_j:=p_j-p_N$, $j=1,\dots,N-1$,
$q_j\in X$, $X=\{p-q:\,p,q\in S\}$,
$\VV(p_1,\ldots,p_N)=\Tilde\VV(q_1,\ldots,q_{N-1})$,
and is analytic in $q_j$. Hence, if $\det \VV(p_1,\ldots,p_N)=0$ for any choice of $p_j\in S$,
then, by Lemma~\ref{lem2}, $\det \Tilde\VV(q_1,\ldots,q_{N-1})$
vanishes on an open set and, due to the analyticity,
$\det \VV(p_1,\ldots,p_N)=0$ for any  $N$-tuple
$(p_1,\ldots,p_N)$ in $(\RR^{2})^N$.

On the other hand, by (c) we have $\VV_{jj}=\hat\nu(0)<0$
independently on $(p_1,\ldots,p_N)$ for every $j$.
Using the determinant expansion
and the condition (d) we obtain $\det\VV(p'_1,\ldots,p'_{N-1},p_N)\to \hat\nu(0) \det\VV(p'_1,\ldots,p'_N)\ne0$
as $p_N= r(\cos\alpha,\sin\alpha)$ and $r\to\pm\infty$.
This contradiction concludes the proof. 
\end{proof}

The item (d) in Lemma \ref{zero} is, of course, the finest one among the assumptions,
and there is an extended literature discussing the decay rates of the Fourier transforms
of measures, see e.~g.~\cite{brand,litt,Mar}.
For our purposes it is sufficient to consider the following two cases.

\begin{lem}\label{c1}
Let $\Gamma$ be a compact $C^2$ curve with non-vanishing curvature, and $\nu=\delta_\Gamma$,
then the assuption (d) in Lemma \ref{zero} is satisfied.
\end{lem}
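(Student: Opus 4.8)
The plan is to realize $\hat\nu$ along a fixed direction as a one-dimensional oscillatory integral and invoke the method of stationary phase. Since (d) only requires decay along a single line, I fix an arbitrary unit vector $\omega=(\cos\alpha,\sin\alpha)$ and show that $\hat\nu(r\omega)\to0$ as $r\to\pm\infty$; the argument will in fact work for every $\alpha$. Let $t\mapsto\gamma(t)$, $t\in[0,L]$, be an arc-length parametrization of $\Gamma$. Then $\nu=\delta_\Gamma$ gives, up to a normalizing constant,
\[
\hat\nu(r\omega)=\mathrm{const}\cdot\int_0^L e^{\rmi r\langle\omega,\gamma(t)\rangle}\,dt
=\mathrm{const}\cdot\int_0^L e^{\rmi r\phi(t)}\,dt,\qquad \phi(t):=\langle\omega,\gamma(t)\rangle.
\]

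First I would analyse the critical points of the phase. One has $\phi'(t)=\langle\omega,\gamma'(t)\rangle$ and $\phi''(t)=\langle\omega,\gamma''(t)\rangle$, so a critical point is exactly a point where the tangent $\gamma'(t)$ is orthogonal to $\omega$. The non-vanishing curvature hypothesis means that $\gamma'(t)$ and $\gamma''(t)$ are linearly independent for every $t$; for an arc-length parametrization this reads $\gamma'\perp\gamma''$ together with $|\gamma''(t)|>0$. Consequently, at a critical point $t_0$ the relation $\gamma'(t_0)\perp\omega$ forces $\gamma''(t_0)\not\perp\omega$ (two vectors both orthogonal to $\omega$ in $\RR^2$ would be parallel), whence $\phi''(t_0)=\langle\omega,\gamma''(t_0)\rangle\neq0$: every critical point of $\phi$ is non-degenerate. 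Since $\Gamma$ is compact and $\phi$ is $C^2$, there are only finitely many critical points $t_1,\ldots,t_m$. I would then split $[0,L]$ into small neighbourhoods $I_l$ of the $t_l$ on which $|\phi''|\ge c'>0$, and the complementary compact set $K$ on which $|\phi'|\ge c>0$. On each $I_l$ the van der Corput estimate with the second-derivative bound gives $\bigl|\int_{I_l}e^{\rmi r\phi}\,dt\bigr|\le C(c'r)^{-1/2}$, while on $K$ a single integration by parts, using $e^{\rmi r\phi}=(\rmi r\phi')^{-1}\frac{d}{dt}e^{\rmi r\phi}$ together with $\phi\in C^2$ and $|\phi'|\ge c$, yields a contribution of order $r^{-1}$ (the boundary terms being bounded by $(cr)^{-1}$). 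Summing the finitely many pieces gives $|\hat\nu(r\omega)|\le C\,r^{-1/2}\to0$, which establishes (d), indeed for every $\alpha$.

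The main obstacle is purely technical: securing the uniform lower bounds $|\phi''|\ge c'$ near the critical points and $|\phi'|\ge c$ on their complement with only $C^2$ regularity available, and confirming that the second-derivative van der Corput estimate requires no monotonicity of $\phi'$ (it does not, by the standard sublevel-set argument). Conceptually, the curvature hypothesis is exactly what excludes degenerate critical points, and hence the slower, possibly non-vanishing, behaviour of the oscillatory integral; compactness of $\Gamma$ guarantees the finiteness of the critical set and the uniformity of the constants.
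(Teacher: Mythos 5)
Your proof is correct and follows essentially the same route as the paper: both reduce $\hat\nu(r\omega)$ to a one-dimensional oscillatory integral over the arc-length parameter, use the non-vanishing curvature to show that $\phi'$ and $\phi''$ cannot vanish simultaneously, split the curve into finitely many pieces on which one of them is bounded below, and conclude by van der Corput estimates. Your explicit integration by parts on the non-stationary set is a minor technical refinement that sidesteps the monotonicity hypothesis in the first-derivative van der Corput estimate, a point the paper's bare citation of Stein's Proposition 2 leaves implicit.
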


\begin{proof}
Let $[0,l]\ni t\mapsto \big(x(t), y(t)\big)$ be the natural parametrization of $\Gamma$.
As the curvature does not vanish, one has 
\begin{equation}
        \label{eq-curv}
\big|x'(t)y''(t)-x''(t)y'(t)\big|\ge c>0
\quad \text{for all }t.
\end{equation}
Consider the expression
\[
\Hat\nu(p)=\dfrac{1}{2\pi}\int_0^l \exp \Big[ -i r \big( \cos\alpha x(t)+\sin\alpha y(t)
\big)\Big] dt, \quad p=r(\cos\alpha,\sin\alpha).
\]
By \eqref{eq-curv}, there exists $\delta>0$ for which one can divide $\Gamma$
into several smooth pieces such that one of the following
conditions is satisifed on each of them:
\begin{itemize}
\item $\big|\cos\alpha x'(t)+\cos\alpha y'(t)\big|\ge\delta$,
\item $\big|\cos\alpha x''(t)+\cos\alpha y''(t)\big|\ge\delta$.
\end{itemize}
Hence, by the van der Corput lemma on oscillatory integrals,
\cite[Section VIII.1, Proposition 2]{stein}, one has $\Hat\nu(p)\to 0$ for $r\to\infty$.
\end{proof}

\begin{lem}\label{c2}
Let $\Gamma$ be a line segment, and $\nu=\delta_\Gamma$,
then the assumption (d) in Lemma \ref{zero} is satisfied.
\end{lem}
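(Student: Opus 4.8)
The plan is to reuse the explicit Fourier representation from the proof of Lemma~\ref{c1}, exploiting the fact that a line segment admits an \emph{affine} parametrization, so that the phase of the oscillatory integral is linear in the curve parameter. The integral then has a closed form and the van der Corput lemma is no longer needed; instead I would read off the decay directly from the elementary bound on $\int_0^l e^{-i\lambda t}\,dt$.

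First I would write $\Gamma$ in its natural parametrization $[0,l]\ni t\mapsto(x(t),y(t))=(x_0,y_0)+tu$, where $u=(u_1,u_2)$ is the unit direction vector of the segment. For $p=r(\cos\alpha,\sin\alpha)$ the representation used in Lemma~\ref{c1} then reduces to
\[
\hat\nu(p)=\frac{e^{-irc_0}}{2\pi}\int_0^l e^{-ir\beta t}\,dt,\qquad c_0=\cos\alpha\,x_0+\sin\alpha\,y_0,\quad \beta=\cos\alpha\,u_1+\sin\alpha\,u_2.
\]
Provided $\beta\neq0$, the inner integral equals $(1-e^{-ir\beta l})/(ir\beta)$, whence $|\hat\nu(p)|\le\big(\pi|\beta|\,|r|\big)^{-1}$, and this tends to $0$ as $r\to\pm\infty$.

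The only delicate point, and the reason the direction in (d) must be chosen rather than taken arbitrarily, is the behaviour when $\beta=0$. Here $\beta=\langle(\cos\alpha,\sin\alpha),u\rangle$ vanishes exactly when the observation line is orthogonal to $\Gamma$, and along that single exceptional direction $|\hat\nu|$ is constant in $r$, so no decay occurs. Since assumption (d) of Lemma~\ref{zero} demands decay along just one straight line, it suffices to pick any $\alpha$ for which $(\cos\alpha,\sin\alpha)$ is not perpendicular to $u$ — for instance the direction of the segment itself, giving $\beta=1$. This verifies (d) and completes the argument.
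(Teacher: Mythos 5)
Your proof is correct and follows essentially the same route as the paper: an affine parametrization makes the phase linear in $t$, the constant phase factor splits off, and decay follows for every direction not perpendicular to the segment, which is all that assumption (d) requires. The only cosmetic difference is that you evaluate $\int_0^l e^{-ir\beta t}\,dt$ in closed form to get an explicit $O(1/|r|)$ bound, where the paper simply cites the Riemann--Lebesgue theorem.
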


\begin{proof}
Let $[0,1]\ni t\mapsto (x_0+at, y_0+bt)$ be a parametrization of $\Gamma$.
Writing
\begin{multline*}
\Hat\nu(p)=\dfrac{\sqrt{a^2+b^2}}{2\pi}
\int_0^1 \exp \Big[ -i r \big( \cos\alpha (x_0+at)+\cos\alpha 
(y_0+bt) \big)\Big]\,dt\\
=
\dfrac{\sqrt{a^2+b^2}\,\exp \big( -i r ( x_0 \cos\alpha +y_0 \sin\alpha ) \big)}{2\pi}
\int_0^1 \exp \Big[ -i r t\big( a \cos\alpha +b \sin\alpha \big)\Big]\,dt,\\
\quad p=r(\cos\alpha,\sin\alpha)
\end{multline*}
we immediately see that $\Hat\nu(r\cos\alpha,r\sin\alpha)\to 0$
as $r\to\pm\infty$ for any $\alpha$ with $a \cos\alpha +b \sin\alpha\ne 0$
by the Riemann-Lebesgue theorem.
\end{proof}

Now we can use Lemmas \ref{zero}, \ref{c1}, and \ref{c2} to prove a general result
on singular interactions supported by curves.

\begin{thm}\label{th2}
Assume that $S$ contains a $C^1$ arc which is not a line segment.
Let $\Gamma$ be a $C^2$ curve, $m=\delta_\Gamma$, and the restriction of $h$ to $\Gamma$
be a negative continuous function, then the operator $H$
has infinitely many eigenvalues below the essential spectrum.
\end{thm}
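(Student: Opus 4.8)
The plan is to manufacture, for each $N\in\NN$, at least $N$ eigenvalues of $H$ strictly below $\varkappa$. Since $\Gamma$ is compact the measure $m=\delta_\Gamma$ is finite, so by the theorem on the coincidence of essential spectra we have $\spec_{\mathrm{ess}}H=\spec_{\mathrm{ess}}H_0=[\varkappa,+\infty)$; hence ``below $\varkappa$'' means exactly ``below the essential spectrum'', and producing infinitely many such eigenvalues is the whole assertion. The engine is Theorem \ref{th1}: it suffices to exhibit, for every $N$, points $p_1,\dots,p_N\in S$ making $\big(\widehat\nu(p_j-p_k)\big)$ negative definite. This is precisely the output of Lemma \ref{zero}, so the task reduces to checking its four hypotheses (a)--(d) for $\nu=h\,\delta_\Gamma$.

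Hypotheses (a)--(c) are immediate. Hypothesis (a) is the standing assumption that $S$ contains a $C^1$ arc that is not a line segment (equivalently, not an interval); (b) holds because $\Gamma$ is compact, so $\supp m=\Gamma$ is compact; and (c) holds because $h|_\Gamma$ is a negative continuous function, hence $h<0$ holds $m$-almost everywhere. The entire content of the proof therefore lies in hypothesis (d): the decay of $\widehat\nu=\widehat{h\,\delta_\Gamma}$ to zero at infinity along some line through the origin.

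To verify (d) I would parametrize $\Gamma$ by a $C^2$ map $\gamma(t)=(x(t),y(t))$, $t\in[0,l]$, and write, for $p=r(\cos\alpha,\sin\alpha)$,
\[
\widehat\nu(p)=\frac{1}{2\pi}\int_0^l h\big(\gamma(t)\big)\,\exp\big[-\rmi r\,\phi_\alpha(t)\big]\,dt,\qquad
\phi_\alpha(t)=\cos\alpha\,x(t)+\sin\alpha\,y(t).
\]
This is exactly the oscillatory integral of Lemmas \ref{c1} and \ref{c2}, now carrying the amplitude $h\circ\gamma$. The direction $\alpha$ must be chosen with care: at a point where the curvature of $\Gamma$ vanishes one has $\phi_\alpha'=\phi_\alpha''=0$ only if $(\cos\alpha,\sin\alpha)$ is orthogonal to the tangent there, so I would pick $(\cos\alpha,\sin\alpha)$ not orthogonal to any tangent at a zero-curvature point. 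This excludes only a null set of admissible directions, because the tangent-angle map is $C^1$ and, by the one-dimensional Sard theorem, sends its critical set (the zero-curvature set) to a set of measure zero. For such an $\alpha$, compactness lets me split $\Gamma$ into finitely many arcs on each of which $|\phi_\alpha'|\ge\delta$ or $|\phi_\alpha''|\ge\delta$, whence the van der Corput estimate of Lemma \ref{c1} or the Riemann--Lebesgue argument of Lemma \ref{c2} gives decay of the corresponding piece.

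The remaining point, and the main obstacle, is the continuous amplitude $h\circ\gamma$, which need not have bounded variation and so is not directly covered by van der Corput. I would handle it by uniform approximation: choose $C^1$ functions $h_\varepsilon$ with $\|h\circ\gamma-h_\varepsilon\|_\infty<\varepsilon$; the integral with $h_\varepsilon$ decays in $r$ by the above, while the error is at most $\varepsilon\,l$ uniformly in $r$, so letting $r\to\infty$ and then $\varepsilon\to0$ yields $\widehat\nu(p)\to0$ along the chosen line. With (a)--(d) established, Lemma \ref{zero} furnishes the hypotheses of Theorem \ref{th1} for every $N$, so $H$ has at least $N$ eigenvalues below $\varkappa$ for all $N$, i.e. infinitely many eigenvalues below the essential spectrum.
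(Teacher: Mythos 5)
Your proposal is correct in substance, but it takes a genuinely different route from the paper. You apply Lemma~\ref{zero} directly to the full measure $\nu=h\,\delta_\Gamma$, which forces you to verify the decay hypothesis (d) for an oscillatory integral whose amplitude $h\circ\gamma$ is merely continuous and whose phase may have degenerate stationary points where the curvature of $\Gamma$ vanishes; you resolve these two difficulties by a Sard-type choice of direction (legitimate, since the tangent-angle map of a regular $C^2$ curve is $C^1$ and its critical set is exactly the zero-curvature set) and by uniform $C^1$-approximation of the amplitude. The paper sidesteps both difficulties with a decomposition: since $h|_\Gamma$ is negative and continuous and $\Gamma$ is $C^2$, there is a compact sub-arc $\Gamma'$ which either has nowhere-vanishing curvature or is a line segment, and a constant $c>0$ with $h\le-c$ on $\Gamma'$; writing $\nu=\nu_1+\nu_2$ with $\nu_1=-c\,\delta_{\Gamma'}$ and $\nu_2=\nu-\nu_1\le0$, Lemma~\ref{zero} applies to the constant-amplitude piece $\nu_1$ via the ready-made Lemmas~\ref{c1} and~\ref{c2}, while Bochner's theorem makes $\big(\widehat\nu_2(p_j-p_k)\big)$ negative semi-definite, so the sum is negative definite and Theorem~\ref{th1} concludes. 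The paper's splitting is shorter, uses only the two model cases already proved, and is robust when $\Gamma$ has finite length but is not compact (there $h$ need not be bounded away from zero and $h\circ\gamma$ need not extend continuously to the closed parameter interval, so your argument, which assumes compactness of $\Gamma$ at the outset, would need an additional cutoff near the endpoints). What your route buys is a stronger intermediate statement of independent interest: $\widehat{h\,\delta_\Gamma}$ vanishes at infinity along almost every direction for any compact $C^2$ curve with continuous density, not just for the constant-density model curves. One technical caveat: on the pieces where only $|\phi_\alpha'|\ge\delta$ holds, van der Corput with $k=1$ formally requires $\phi_\alpha'$ to be monotone, so one should instead integrate by parts and use $\int|\phi_\alpha''|\,dt<\infty$; this imprecision, however, is already present in the paper's own Lemma~\ref{c1}, so it does not set your argument apart.
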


\begin{proof}
There exists a part of $\Gamma$, $\Gamma'$, with the following properties:
\begin{itemize}
\item $\Gamma'$ is a compact $C^2$ curve,
\item $\Gamma'$ either has  non-vanishing curvature or is a line segment,
\item $h|_{\Gamma'}\le - c$, $c>0$.
\end{itemize}
Represent $\nu=\nu_1+\nu_2$, $\nu_1=-c\delta_{\Gamma'}$,
$\nu_2:=\nu-\nu_1$. By construction one has $\nu_2\le 0$.
By Lemma \ref{zero}, for any $N$ there exist $p_1,\dots,p_N\in S$
such that the matrix $\big(\Hat\nu_1(p_j-p_k)\big)$ is negative definite.
As $\nu_2$ is non-positive, the matrix $\big(\Hat\nu_2(p_j-p_k)\big)$
is at least negative semi-definite by the Bochner theorem,
and $\big(\Hat\nu(p_j-p_k)\big)$ is hence negative definite.
Thus, $H$ has infinitely many eigenavlues below the essential spectrum by Theorem \ref{th1}.
\end{proof}

The assumption (a)  of Lemma \ref{zero} obviously holds for the Rashba and Dresselhaus Hamiltonians.
Hence, Theorem \ref{th2} guarantees the existence of infinitely many eigenvalues
below the continuous spectrum under negative perturbations supported by smooth curves.

If the set  $S$ for the unperturbed Hamiltonian $H_0$
is very ``bad'' and does not contain any smooth arc, then
an analogue of Lemma~\ref{lem2} becomes a difficult problem from the general topology.
Nevertheless, the assumptions of Lemma~\ref{lem2} are naturally satisfied
in reasonable examples including the above Rashba and Dresselhaus Hamiltonians.

We note in conclusion that the method proposed for estimating the number of eigenvalues
is quite universal but very rough; it does not take into account e.g. the Kramers degeneracy
(all eigenvalues of perturbed Rashba Hamiltonians must be at least twice degenerate).

\subsection*{Acknowledgments}
The work was supported  by the Deutsche Forschungsgemeinschaft.
We thank Johannes Brasche, Grigori Raikov, and Grigori Rozenblum
for useful advices and several bibliographical hints.

\end{document}